\title{Store-Forward and its implications for Proportional Scheduling}
\newtheorem{theorem}{Theorem}
\newtheorem{corollary}{Corollary}
\newtheorem{proposition}{Proposition}
\newtheorem{remark}{Remark}
\newcommand{\mC}{{\mathcal C}}
\newcommand{\mE}{{\mathcal E}}
\newcommand{\mJ}{{\mathcal J}}
\newcommand{\mL}{{\mathcal L}}
\newcommand{\mR}{{\mathcal R}}
\newcommand{\mS}{{\mathcal S}}
\newcommand{\bR}{{\mathbb R}}
\newcommand{\bZ}{{\mathbb Z}}
\newcommand{\bE}{{\mathbb E}}
\newcommand{\bI}{{\mathbb I}}
\newcommand{\T}{\textsf{T}}
\newcommand{\vA}{{\mathbf A}}
\newcommand{\vM}{{\mathbf M}}
\newcommand{\vQ}{{\mathbf Q}}
\newcommand{\ve}{{\mathbf e}}
\newcommand{\vm}{{\mathbf m}}
\newcommand{\vn}{{\mathbf n}}
\newcommand{\vs}{{\mathbf s}}
\newcommand{\vecQ}{{\bf Q}}
\newcommand{\vecGamma}{{\boldsymbol \Gamma}}
\newcommand{\vecsigma}{{\boldsymbol \sigma}}
\author{N.S. Walton$^1$
\thanks{$^{1}$Korteweg-de Vries Institute for Mathematics,
University of Amsterdam,
Science Park 904,
1098 XH Amsterdam,
The Netherlands.
The work of Neil Walton is funded by the VENI research
programme, which is financed by the Netherlands Organisation for Scientific Research (NWO).
        {\tt\small n.s.walton@uva.nl}}%
}
\begin{document}

\maketitle
\thispagestyle{empty}
\pagestyle{empty}

\begin{abstract}
The Proportional Scheduler was recently proposed as a scheduling algorithm for
multi-hop switch networks.
For these networks, the BackPressure scheduler is the classical benchmark.
For networks with fixed routing, the Proportional Scheduler is maximum stable, myopic and, furthermore, will alleviate certain scaling issued found in BackPressure for large networks. Nonetheless,
the equilibrium and delay properties of the Proportional Scheduler has not been fully characterized. 

In this article, we postulate on the equilibrium behaviour of the Proportional Scheduler though the analysis of an analogous rule called the Store-Forward allocation. It has been shown that Store-Forward has asymptotically allocates according to the Proportional Scheduler. Further, for Store-Forward networks, numerous equilibrium quantities are explicitly calculable. For FIFO networks under Store-Forward, we calculate the policies stationary distribution and end-to-end route delay. We discuss network topologies when the stationary distribution is product-form, a phenomenon which we call \emph{product form resource pooling}. We extend this product form notion to  independent set scheduling on perfect graphs, where we show that non-neighbouring queues are statistically independent.
Finally, we analyse the large deviations behaviour of the equilibrium distribution of Store-Forward networks in order to construct Lyapunov functions for FIFO switch networks.

\end{abstract}

\section{Introduction}

Switch networks model numerous communication networks where service at a queue may inhibit the operation of other queues. Common examples include wireless ad-hoc networks \cite{taep92} and input queued switches \cite{mckeown1999achieving}. Further the paradigm is provides useful in insights in other contexts: bandwidth sharing, call centers, road traffic, data centers, manufacturing systems. At each time a scheduler makes a scheduling decision and, since load often varies, scheduling is often based on current or recent queue size information, rather than on explicit estimation of the network's long-run load. An important first order property of a ``good" scheduling policy is to be \emph{maximum stable} which states that, for the largest possible set of arrival rates, queues remain bounded (on average) and thus have an equilibrium length. 
Once this first order stability condition is established then secondary properties such as equilibrium queue length and delay can be investigated. 

The celebrated BackPressure policies of Tassiulas and Ephremedes  \cite{taep92} were the first class of scheduling policies that were proven to be maximum stable without explicit estimation of traffic load. The BackPressure policies are defined through a Lyapunov function argument, %: we take the sum of the squared queue lengths and then chose the schedule that minimizes the (instantaneous) drift of this process. Provided this drift can be ensured to be negative, then the resulting BackPressure policy stabilizes the network. 
and provide a robust, generic approach to stabilizing a queueing network. Nonetheless, characterizing the stationary behaviour of the resulting processes can be difficult, most bounds on stationary queue length are made through the policy's associated Lyapunov function \cite{NMR05,stzopen}. Further, delay scaling along lengthy routes can be shown to be suboptimal, see \cite{BSS11}.

A scheduling policy called the Proportional Scheduler was recently proven to be maximum stable for multi-hop switch networks,  for networks with fixed routing. The Proportional Scheduler has structural advantages in comparison to BackPressure. Packets can be aggregated at first-in first-out (FIFO) queues and, as a result, the scheduler does not maintain information about the routes taken by packets in order to make a scheduling decision.  Since, for a communication network the number of routes supported by a network is, in general, orders of magnitude larger than the number of queues maintained by the network, this (in comparison to BackPressure) substantially reduces the structure required in memory to implement the policy and also leads to greater potential for decomposed implementation. Modifications of BackPressure have been developed in an effort to alleviate these issues \cite{BSS11,JJS13}. These and further consequences Proportional Scheduling are discussed in greater detail in the papers \cite{Wa14a,BDW14}. 

The principle aim of this article is to propose queue length and delay estimates for the Proportional Scheduler, through formula that can be explicitly calculated for Store-Forward networks -- a continuous time quasi-reversible queueing network which is know to asymptotically allocate resources as a Proportional Scheduler \cite{Wa09}. 
Due to the insensitivity property, the Store-Forward allocation was first analyzed by Bonald and 
Proutiere \cite{bopr04}. 
Indeed, the reason that the Proportional Scheduler mitigates some of the scaling phenomena found in 
BackPressure, is its close relationship between proportional fairness and reversible queueing 
systems (like the Store-Forward allocation). In reversible queueing networks, different packet types 
can be aggregated at each resource, and delays are explicit and depend on network load -- rather 
than network location as is found in BackPressure. 

We extend the single-class Markovian routing (or Jackson routing) considered by Bonald and Proutiere to a multi-class fixed routing (or Kelly routing). Through standard quasi-reversibility arguments we calculate the stationary distribution, queue size and delay for the Store-Forward Network.\\
%Although the continuous time setting is different to the discrete time setting considered in switch networks, these policies can be used to investigate the performance of a switch policy called the Proportional Scheduler. Further there are mechanisms and settings, where Store-Forward can (approximately) be implemented in Switch networks, see \cite{swz11}.\\

\noindent \emph{Outline of Results}\\
In Theorem \ref{SF:eq}, we calculate the stationary distribution of Store-Forward networks with fixed routes where queues operate under a FIFO discipline. We then calculate the stationary delay of these Store-Forward networks, Theorem \ref{ThrmDelay}, and thus postulate the stationary delay for switch queueing networks operating under the Proportional Scheduler.

Next, we develop on the product from resource pooling work found for bandwidth sharing networks by Kang, Kelly, Lee, Williams \cite{kklw07ii,kklw07i}. For this work generalized Heavy traffic analysis and investigations into insensitivity are given by \cite{YeYao12,VZZ14}.
First, it was argued in \cite{swz11,BDW14} that independent switch components have independent stationary distributions, in heavy traffic. We extend these arguments to   investigate  Store-Forward on the interference graphs where the interference graph is perfect, and we show that the stationary queue length between non-adjacent queues are independent. See Theorem \ref{prodthrm} and Corollary \ref{coroll}. The result is somewhat surprising given  that independence is typically proven on product spaces.  

Finally, we analyse the large deviations behaviour of a stationary Store-Forward Network. We discuss how the rate function found by this large deviations analysis can be used as a Lyapunov function for FIFO networks operating under the Proportional Scheduler.

\section{Queueing Network Notation}
We let $\mJ$ index a set of queues. A schedule $\vecsigma$ is a vector in $\bZ_+^\mJ$ and we let $\mS$ be the (finite) set of schedules. We assume that the set $\mS$ is monotone in the sense that if $\vecsigma\in\mS$  and if $\tilde{\vecsigma}\in\bZ_+^\mJ$ is $\tilde{\vecsigma}\leq \vecsigma$ component-wise then $\tilde{\vecsigma}\in\bZ_+^\mJ$. 

We let $<\!\mS\! >$ be the convex combination of points in $\mS$. The set $<\!\mS\! >$ is a polytope and thus by our monotone assumption there exist a non-negative (full row rank) matrix $\vA=(A_{lj}:l\in\mL,j\in\mJ)$  such that 
\[
<\mS> = \bigg\{ \vs\in\bR_+^\mJ : \sum_{j\in\mJ} A_{lj}s_j \leq 1, l\in\mL \bigg\}.
\]
We call each facet $l\in\mL$ of the polytope $<\mS>$ a resource pool and, thus, $\mL$ is the set of resource pools. We use the notation that $j \in l$ if $A_{jl}>0$ and $j\notin l$ if $A_{jl}=0$.

A route is a finite ordered set of queues $r=(j^r_1,...,j^r_{k_r})$. We will assume that the queues visited on a route $r$ are distinct.\footnote{With a little extra notation, it is possible to visit a queue multiple times on a route.} We let the finite set $\mR$ be the set of routes. We use the notation $j\in r$ if queue $j$ is an element of route $r$.

We let $\vecQ=(Q_j : j\in\mJ)\in\bZ_+^\mJ$ give the vector of queue sizes. We must also consider the order of jobs within a queue. We let $\vecGamma_{jr}=(\Gamma_{jr}(1),...,\Gamma_{jr}(Q_j))$ counts the cumulative number of packets of route $r$ at queue $j$, that is $\Gamma_{jr}(1)=1$ if the first job in queue $j$ is from route $r$, and equal zero otherwise. Similarly, $\Gamma_{jr}(2)-\Gamma_{jr}(1)$ indicates the presence of a route $r$ job in the second position and so forth.

We let $a_r$ be the arrival rate on route $r$. Thus the load on queue $j\in\mJ$ and the load on resource pool $l\in\mL$ are respective is given by
\begin{equation}
a_j:=\sum_{r: j\in r} a_r,\qquad a_l:= \sum_{l : j\in l } A_{lj} a_j.
\end{equation}

\section{The Store-Forward Networks}
We describe a continuous time Markov chain called the Store-Forward Network, see Section \ref{SFN}. Here, queueing resources are allocated according to the Store-Forward allocation, rather than BackPressure or Proportional Fairness (cf. Section \ref{SN:PF}). %We can study the Store-Forward network as a Markov chain of independent interest in its own right, and indeed analysis of this form can be found in the insensitivity analysis of Bonald and Proutiere \cite{bopr04}. 
The desirable properties that are associated with Store-Forward -- such as quasi-reversibility and product-form -- are inherited from its close relationship with quasi-reversible queueing networks. Essentially the Store-Forward allocation is defined as the throughput of a certain Kelly network. This point is discussed below in Section \ref{ClosedSF}.

Further, properties found in Store-Forward pass over to the Proportional Scheduler. This is because the asymptotic behaviour of Store-Forward is to allocate resources in the same way as the Proportional Scheduler and, as we discussed in the introduction, the Proportional Scheduler has been shown to have a number of significant structural advantages in comparison to BackPressure. These points are discussed in more detail in Sections \ref{PFSF} and \ref{SN:PF}.

We are interested in Store-Forward because of its implications for switch networks. 
Concrete implementations of Store-Forward in switch networks are possible, for instance, see \cite{swz11} where optimal queue size bounds for Input-queued switches can be proven.
However, we do not discuss direct implementation of Store-Forward in this article since we are interested in its consequences for the associated Proportional Scheduler. 
We discuss how results on Store-Forwards Networks can be phrased in the context of switch networks. Initial comments in this regard are made in Section \ref{SNSF} below, before developing this point further in subsequent sections.

\subsection{Store-Forward Networks}\label{SFN}
Given a vector of queue sizes $\vQ=(Q_j:j\in\mJ)$ the Store-Forward allocation $\vecsigma^{SF}(\vecQ)$ is defined by
\begin{equation}\label{StoreForward}
\vecsigma^{SF}_j(\vecQ) = \frac{\Phi(\vecQ-\ve_j)}{\Phi(\vecQ)},
\end{equation}
for each $j\in\mJ$, where $\ve_j\in\bZ_+^\mJ$ is the $j$th unit vector. The positive function $\Phi(\vecQ)$ has an explicit form; however, rather than give a formula for $\Phi(\vecQ)$ which at this point would be uninformative, we delay its definition until Section \ref{ClosedSF} below.

The Store-Forward allocation was first introduced by Bonald and Proutiere \cite{bopr04} and has typically been studied as a model of bandwidth sharing. For this reason, packets at a queue leave after completing service and the discipline within a queue has typically between processor-sharing.  However, in the context of switch networks, we wish packets to traverse between the queues on their route and for packets to be served in a first-in first-out (FIFO) manner. For this reason, we consider the following continuous time Markov chain, which we call a \emph{FIFO routed Store-Forward Network}.

Packets from each route $r$ arrive, respectively, as a Poisson processes of rate $a_r$ and join the first queue on route $r$, namely, $j^r_1$. Packets at each queue are assumed to have a service requirement that is independent exponentially distributed of mean 1. Given $\vecQ$ packets are served from queue $j$ at rate $\sigma^{SF}_j(\vecQ)$. The jobs within a queue are served in a first-in first-out manner and a job completing service at the front of the queue will subsequently join the back of the next queue on its route (or depart the network if the packet is at the final queue on its route).
 %We next give more detail on the function $\Phi$ described above and the condition for the existence of the above equilibrium distribution.

\subsection{Kelly Networks, Store-Forward and $\Phi(\vecQ)$} \label{ClosedSF}
We now give an explicit expression for the Store-Forward allocation and the associated function $\Phi$. We do so by explaining a seemingly unrelated model of a closed queueing network. 

Consider a network of processor-sharing queues indexed by $\mL$.
Each server processes work at a unit rate.
To avoid ambiguity with the index $j$, we henceforth refer to these queues $\l\in\mL$ as \emph{pools}.
We assume that jobs from classes indexed by $\mJ$ arrive into this network at rate $a_j$, $j\in\mJ$. 
We assume that the pools are ordered, $\mL=\{ l_1,l_2,...,l_{|\mL|}\}$ , and that each arriving job visits these queues sequentially in this order. 
The load of a job at a queue depends on the class $j$ and the index $l$ of the pool. In particular, we assume that a class $j$ job at queue $l$ has a service requirement that is exponentially distributed with mean $A_{lj}^{-1}$.

It is somewhat classical queueing theory that the stationary number of jobs of each class at each pool, $\vm=(m_{lj}: j\in l, l\in\mL)$, is
\begin{equation}\label{picirc}
\pi^\circ(\vm)= \prod_{l\in\mL}\left[ (1-a_l)^{-1} { m_l \choose m_{lj} : j\in l } \prod_{j\in l} \Big( {A_{lj}a_j}\Big)^{m_{lj}}\right].
\end{equation}
We refer the read to Kelly \cite[Chapter 3]{Ke79} for a standard treatment of this result.
We condition the number of packets of each class to be $\vQ=(Q_j : j\in\mJ)$ then the (conditional) stationary distribution of packets in this network is given by
\begin{equation*}
\pi^\circ(\vm|\vQ)= \frac{1}{\Phi(\vQ)}\prod_{l\in\mL}\left[ { m_l \choose m_{lj} : j\in l } \prod_{j\in l} \Big( {A_{lj}}\Big)^{m_{lj}}\right].
\end{equation*}
where $\Phi(\vQ)$, as was briefly introduced in the previous subsection, is defined to be the 
normalizing constant achieved by summing the above terms over $\vm$ with $\sum_{l: j\in l} m_{jl} = 
Q_j$, $j\in\mJ$ (after canceling out the $a_j$ terms). Namely,
\begin{equation}\label{PhieqPi}
\Phi(\vQ)\prod_{j\in\mJ} \Big( a_j^{Q_j} \Big) = \pi^\circ \Big(\Big\{ \vm :  \sum_{l: j\in l} m_{jl} = Q_j, j\in\mJ  \Big\}\Big).
\end{equation}

We now have a formal definition for $\Phi(\vQ)$; however the significance of the constant is its relationship with the throughput with the network just described. Basically, conditional the number of customers of each class $\vQ=(Q_j:j\in\mJ)$ , the stationary throughput for class $j$ jobs in this processor-sharing network is given by the ratio
\begin{equation}\label{Whittle}
\frac{\Phi(\vQ-\ve_j)}{\Phi(\vQ)},
\end{equation}
in other words, by the Store-Forward allocation, \eqref{StoreForward}. It is well known that 
allocations of the above form, admit reversible and insensitive Markov chains, see Whittle 
\cite{Wh85}. 

For these calculations, which are only briefly described, are relatively straight-forward and we refer the reader to \cite{Ke79,Wa09} for detailed proofs.

\subsection{Proportional Fairness and Store-Forward} \label{PFSF}
Given a vector of queue sizes $\vQ=(Q_j:j\in\mJ)$, the proportional fair optimization is 
\begin{equation}
\sigma^{PF}(\vQ) \in \max_{\sigma\in <\mS>} \sum_{j\in\mJ} Q_j \log \sigma_j .
\end{equation}
The proportional fair optimization was first described by Kelly \cite{Ke97}. 
From this optimization, we can define the Proportional Scheduler for multi-hop switch networks. We do this in the next subsection. The main connection between Store-Forward and proportional fairness is that the limit of the Store-Forward allocation is a solution to Proportional fair optimization.

\begin{proposition}
\[
\sigma^{SF}(c\vQ) \xrightarrow[c\rightarrow\infty]{} \sigma^{PF}(\vQ) 
\]
\end{proposition}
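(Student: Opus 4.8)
The plan is to combine the closed‑network representation of $\Phi$ with a concentration argument for the conditioned product‑form measure. By the conditional law $\pi^\circ(\vm\,|\,\vQ)$ displayed in Section~\ref{ClosedSF}, and writing $(B\vm)_j:=\sum_{l:j\in l}m_{lj}$ and $m_l:=\sum_{j\in l}m_{lj}$,
\[
\Phi(\vQ)=\sum_{\vm\,:\,B\vm=\vQ}\ \prod_{l\in\mL}\binom{m_l}{m_{lj}:j\in l}\prod_{j\in l}A_{lj}^{\,m_{lj}}.
\]
The first step is an elementary identity: using $\tfrac{m_{lj}}{m_l}\binom{m_l}{m_{lj}:j\in l}=\binom{m_l-1}{m_{lj}-1,\,(m_{lk})_{k\ne j}}$ and re‑indexing $m_{lj}\mapsto m_{lj}-1$ inside pool $l$, one checks that for \emph{every} pool $l$ with $j\in l$,
\[
\sigma^{SF}_j(\vQ)=\frac{\Phi(\vQ-\ve_j)}{\Phi(\vQ)}=\mathbb{E}_{\pi^\circ(\cdot\,|\,\vQ)}\!\left[\frac{m_{lj}}{A_{lj}\,m_l}\right].
\]

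The second step is a large‑deviations/concentration statement under scaling. By Stirling, $\pi^\circ(\vm\,|\,c\vQ)\propto\exp\!\big(c\,\Psi(\vm/c)+O(\log c)\big)$, where $\Psi(\mu):=\sum_{l\in\mL}\sum_{j\in l}\mu_{lj}\log\!\big(A_{lj}\mu_l/\mu_{lj}\big)$ with the conventions $0\log0:=0$, $\mu_l:=\sum_{j\in l}\mu_{lj}$. Since $\Psi$ is concave (each pool term is an affine function plus the perspective of Shannon entropy) and $\{\mu\ge0:B\mu=\vQ\}$ is a compact polytope, the rescaled state $\vm/c$ under $\pi^\circ(\cdot\,|\,c\vQ)$ concentrates on the maximiser $\mu^\star$ of $\Psi$ over this polytope (unique under mild non‑degeneracy; the ratios used below are common to all maximisers in any case). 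Choosing $l\ni j$ with $\mu^\star_{lj}>0$, the bounded integrand $\tfrac{m_{lj}}{A_{lj}m_l}\le A_{lj}^{-1}$ converges in probability to $\tfrac{\mu^\star_{lj}}{A_{lj}\mu^\star_l}$, so by the identity above $\sigma^{SF}_j(c\vQ)\to \tfrac{\mu^\star_{lj}}{A_{lj}\mu^\star_l}$.

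The last step identifies this limit with the proportional fair allocation by convex duality. Since $\partial_{\mu_{lj}}\Psi=\log(A_{lj}\mu_l/\mu_{lj})$, the KKT conditions for $\max\{\Psi(\mu):\mu\ge0,\,B\mu=\vQ\}$ give $A_{lj}\mu^\star_l/\mu^\star_{lj}=e^{\lambda^\star_j}$, where $\lambda^\star\in\bR^\mJ$ is the multiplier of $\sum_{l:j\in l}\mu_{lj}=Q_j$; hence $\tfrac{\mu^\star_{lj}}{A_{lj}\mu^\star_l}=e^{-\lambda^\star_j}$. Taking the Lagrangian dual of this concave program, the inner maximisation over $\mu\ge0$ decouples across pools, each pool‑$l$ term is positively homogeneous of degree one in $(\mu_{lj})_{j\in l}$, and the Gibbs variational formula shows it equals $0$ when $\sum_{j\in l}A_{lj}e^{-\lambda_j}\le1$ and $+\infty$ otherwise; with $s_j:=e^{-\lambda_j}$ the dual constraints read $s\in\regS$, and Slater's condition holds since $\regS$ has non‑empty interior, so
\[
\lambda^\star\in\argmin\Big\{\textstyle\sum_j\lambda_j Q_j:\ \sum_{j\in l}A_{lj}e^{-\lambda_j}\le1\ \forall l\Big\},\quad\text{i.e.}\quad (e^{-\lambda^\star_j})_{j\in\mJ}\in\argmax_{s\in\regS}\sum_{j\in\mJ}Q_j\log s_j=\sigma^{PF}(\vQ).
\]
Therefore $\sigma^{SF}_j(c\vQ)\to e^{-\lambda^\star_j}=\sigma^{PF}_j(\vQ)$ for every $j$, which is the claim. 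I expect the main obstacle to be making the concentration in the second step fully rigorous along with the attendant degeneracies — uniqueness of the maximiser $\mu^\star$, pools or queues carrying no fluid mass (some $\mu^\star_l=0$ or $Q_j=0$), and non‑uniqueness of the dual when $\vQ$ is on the boundary, in which last case the conclusion should be read as convergence to the set of proportional fair allocations; the convex‑analytic identification in the third step is routine. (An alternative route avoids concentration and instead shows $\tfrac1c\log\Phi(c\vQ)\to-\max_{s\in\regS}\sum_jQ_j\log s_j$ by Laplace's method and then upgrades this to the required $O(1)$ asymptotics of $\log\Phi(c\vQ-\ve_j)-\log\Phi(c\vQ)$ using discrete concavity of $\log\Phi$; there the upgrade step is the crux.)
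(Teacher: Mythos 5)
The paper does not actually prove this proposition: it is stated with a citation to \cite{Wa09} (with heuristic antecedents in Schweitzer, Kelly, and Massouli\'e--Roberts), so there is no in-text argument to compare against line by line. Judged on its own, your strategy is sound and close in spirit to the quasi-reversible-network proofs in the cited literature. Your step-one identity is correct and is the best part of the argument: writing $\Phi(\vQ-\ve_j)/\Phi(\vQ)=\bE_{\pi^\circ(\cdot|\vQ)}[m_{lj}/(A_{lj}m_l)]$ for any pool $l\ni j$ converts the ratio of partition functions into the expectation of a bounded, scale-invariant functional, which is exactly what lets you get away with only an exponential-scale (Laplace/Sanov) concentration estimate rather than the much more delicate $O(1)$ asymptotics of $\log\Phi(c\vQ-\ve_j)-\log\Phi(c\vQ)$. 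The only trace of this result in the paper is the quoted limit $\tfrac1c\log\Phi(\vQ^c)\to-\max_{\sigma\in\regS}\sum_jQ_j\log\sigma_j$ in Section \ref{sec:LDP}, which corresponds to your parenthetical ``alternative route''; as you correctly observe, that route stalls at the upgrade step, so your main route is the better one. The duality computation in step three is standard and correct (the per-pool term is the perspective of an entropy, the Gibbs bound gives the $0/+\infty$ dichotomy, and strict concavity of $\sum_jQ_j\log s_j$ in the coordinates with $Q_j>0$ makes $s^\star_j=e^{-\lambda^\star_j}$ unique even when the primal maximiser $\mu^\star$ is not). The residual gaps are the ones you name yourself --- making the concentration uniform enough to pass to the expectation, and the degenerate coordinates with $Q_j=0$, for which $\sigma^{PF}_j(\vQ)$ is itself not pinned down by the optimization and the statement has to be read componentwise on $\{j:Q_j>0\}$; one further point worth recording is that at any maximiser the infinite one-sided derivative of the entropy forces $\mu^\star_{lj}>0$ whenever $\mu^\star_l>0$, which is what guarantees a valid choice of $l$ exists for every $j$ with $Q_j>0$. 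None of these is a conceptual obstruction.
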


This result is proven in \cite{Wa09} and earlier heuristic derivations can be found in Schweitzer \cite{Sc79}, Kelly \cite{Ke89}, Roberts and Massouli\'{e} \cite{MaRo99}. These statements generalize to further reversible allocations  Massouli\'{e} \cite{Ma07} and Walton \cite{Wa11b}.

\subsection{BackPressure and the Proportional Scheduler}\label{SN:PF}
For multi-hop networks, the BackPressure policies represent the canonical maximum stable scheduling policy. Letting $X_{jr}(t)$ be the number of route $r$ packets at queue $j$ at time $t$, the BackPressure policy is as follows
\begin{description}\item[BP1.] Calculate weights by comparing with downstream queue lengths,
\begin{equation}\label{BPwOpt}
w_j(X(t))= \max_{r: j\in r} \left\{   X_{jr}(t) - X_{j_+^rr}(t) , 0 \right\},\footnote{Here if there is no next link after $j$ on route $r$ then we set $X_{j_+^rr}=0$.}
\end{equation}
and let $r^{BP}_j(t+1)$ be the solution to this maximization. Here $j^r_+$ denotes the queue subsequent to $j$ on route $r$.
\item[BP2.] Over set of schedules $\mS$, solve the optimization
\begin{equation}\label{BPOpt}
\max_{\sigma\in\mS} \;\; \sum_{j\in\mJ} \sigma_j w_j(X(t)),
\end{equation}
and let $\sigma^{BP}(t+1)$ be an optimal solution.
\item[BP3.] If $w_j(X(t))>0$, at the next time instance schedule $\sigma_j^{BP}(t+1)$ packets from route $r^{BP}_j(t+1)$ from each queue $j\in\mJ$, else, do not schedule any packets from queue $j$.
\end{description}
Notice in the first step above, information must be exchanged along links to make a queue size 
comparison \eqref{BPwOpt}. Further, note that the policy needs to know the route of each packet at 
each queue in order to make a scheduling decision. If the number of routes is large, then this can 
be prohibitive.

However, it is argued in \cite{Wa14a,BDW14} that there are substantial advantages in implementing a proportional fair optimization in a switch network. For the multi-hop switch networks with fixed routing, the Proportional Scheduler is as follows: 
given a vector of link queue sizes $(Q_j(t): j\in\mJ)$,

\begin{description}\item[PS1.]  Over set of schedules $ <\! \mS\! >$ solve the optimization
\begin{subequations}\label{MH-PF}
\begin{align}
&\text{maximize}&& \sum_{j\in\mJ} Q_j(t) \log (s_j) \\
&\text{over}&& s\in <\! \mS\! >.
\end{align}
\end{subequations}
Let $\sigma(t+1)$ be a random variable on $\mS$ whose mean solves this optimization.
\item[PS2.] From each queue $j\in\mJ$, serve $\sigma_j(t+1)$ packets from the front of queue $j$. Each of these packets then join the back of their next downstream queue as determine by its route class.\footnote{In joining the back of queue, ties are broken arbitrarily.}
\end{description}
It is proven in the recent paper, \cite{BDW14}, that for FIFO service the proportional scheduler is 
maximum stable. When the rule PS2 is replaced with a random service discipline, maximum stability is 
proven in the paper \cite{Wa14a}.
The main structural advantage of the Proportional Scheduler over BackPressure is that it does not 
require information about the route taken by a packet in order to make a scheduling decision. This 
substantially reduces the information required for implementation and thus exhibits great potential 
for decentralized implementations. See \cite{Wa14a} and \cite{BDW14} for a more in depth discussion 
of this point.

\subsection{Switch networks and Store-Forward} \label{SNSF}
Switch networks as described in Section \ref{SN:PF} are discrete time queueing networks while the 
Store-Forward Network as defined in Section \ref{SFN} is a continuous time queue network. Thus, 
although there is an asymptotic relationship between the Proportional Scheduler and Store-Forward, 
the Store-Forward allocation is not constructed to be an implementable discrete time scheduling 
policy for switch networks and is instead used for reasons of analytic tractability. Nonetheless, 
we note that implementations of Store-Forward are possible in switch networks, we refer the read to 
\cite{swz11} for an algorithm which implements a continuous time Store-Forward network on a 
single-hop switched queueing network. Here certain optimality properties can be proven due to this 
relationship with product form queue networks.

\section{Calculations for Store-Forward Networks}
In this section we derived various stationary characteristics for FIFO routed Store-Forward 
Networks: in Section \ref{StandSt}, we give the stationary distribution of this Store-Forward 
Network; in Section \ref{Sec:Delay} we give the stationary route delay of a Store-Forward Network; 
in Section \ref{Sec:Prod} we derive product form results for the stationary Store-Forward network 
and analyse it on independent set scheduling for perfect graphs; and finally, in Section 
\ref{sec:LDP} we heuristically a large deviations principle for these networks and discuss how it 
can be used to construct a Lyapunov function used for stability analysis. For each result, we 
conjecture an analogous result for the Proportional Scheduler.

\subsection{Stationary Distribution and Stability}
\label{StandSt}
It is not hard to verify that the Store-Forward Network as described in Section \ref{SFN} is quasi-reversible and so by application of Kelly\rq{}s lemma \cite[Theorem 1.13]{Ke79} has stationary distribution as follows:

\begin{theorem}\label{SF:eq}
A FIFO routed Store-Forward Network is positive recurrent when $(a_j: j\in\mJ)\in <\mS>^\circ$ and has an equilibrium distribution of the form
\begin{equation}\label{SFEQ}
\pi(\vecQ,\vecGamma) = {\Phi(\vecQ)} \prod_{j\in\mJ}\prod_{r: j\in r} \Big({a_r}^{\Gamma_{jr}(Q_j)}\Big).
\end{equation}
\end{theorem}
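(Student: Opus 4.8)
The plan is to apply Kelly's lemma by exhibiting the time-reversed chain explicitly; this is the standard quasi-reversibility route, and the only real work is a short algebraic telescoping. First I would write down the generator of the continuous-time chain $(\vecQ(t),\vecGamma(t))$: from a state $(\vecQ,\vecGamma)$, for each route $r$ an arrival occurs at rate $a_r$ and appends a route-$r$ label to the back of queue $j^r_1$; and for each queue $j$ with $Q_j>0$, whose head-of-line label belongs to a route $r$, a transition occurs at rate $\sigma^{SF}_j(\vecQ)=\Phi(\vecQ-\ve_j)/\Phi(\vecQ)$ which removes that label from the head of $j$ and, writing $j=j^r_m$ (the queues on a route being distinct, $m$ is unique), appends it to the back of $j^r_{m+1}$ if $m<k_r$ and otherwise removes the packet from the network. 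All rates are then well-defined functions of $(\vecQ,\vecGamma)$, and the chain is irreducible.

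Next I would guess that the reversed chain is the analogous Store-Forward network built on the reversed routes $\bar r=(j^r_{k_r},\dots,j^r_1)$, with the same schedule set $\mS$ and matrix $\vA$ and hence the same $\Phi$ (the pool loads $a_j=\sum_{r:j\in r}a_r$, and so the $a_l$, are unchanged by route reversal), with arrival rate $a_{\bar r}:=a_r$, and with each FIFO queue replaced by a last-in-first-out queue. With this candidate I would check the two hypotheses of Kelly's lemma \cite[Theorem~1.13]{Ke79}: first, the total jump rate out of any state equals $\sum_r a_r+\sum_{j:Q_j>0}\sigma^{SF}_j(\vecQ)$ in both chains, which is immediate; second, the relation $\pi(x)q(x,y)=\pi(y)q^{\leftarrow}(y,x)$ for each of the three transition types. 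In each case $\pi$ from \eqref{SFEQ} changes by exactly one factor of some $a_r$ together with one ratio of $\Phi$-values, while the reversed rate is itself either an $a_r$ or a Store-Forward ratio of the form $\Phi(\cdot-\ve_j)/\Phi(\cdot)$, and everything telescopes: for an internal hop $j^r_m\to j^r_{m+1}$ the required identity collapses to $\Phi(\vecQ)\,\sigma^{SF}_{j^r_m}(\vecQ)=\Phi(\vecQ-\ve_{j^r_m})$, i.e.\ the definition of the Store-Forward allocation, whereas for an arrival or a departure the single $a_r$ weight cancels against the single $\Phi$-ratio. This at once shows that \eqref{SFEQ} is an invariant measure and identifies the reversed chain, so the network is quasi-reversible.

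Finally, for positive recurrence I would sum \eqref{SFEQ}. Summing first over the orderings $\vecGamma$ consistent with a fixed $\vecQ$, the multinomial theorem collapses $\sum_{\vecGamma}\prod_{j\in\mJ}\prod_{r:j\in r}a_r^{\Gamma_{jr}(Q_j)}$ to $\prod_{j\in\mJ}a_j^{Q_j}$, so $\sum_{\vecQ,\vecGamma}\pi(\vecQ,\vecGamma)=\sum_{\vecQ}\Phi(\vecQ)\prod_{j\in\mJ}a_j^{Q_j}$, which by \eqref{PhieqPi} is the total mass of the Kelly-network measure $\pi^\circ$ of \eqref{picirc}. That total mass is finite exactly when $a_l=\sum_{j\in l}A_{lj}a_j<1$ for every resource pool $l$, i.e.\ exactly when $(a_j:j\in\mJ)\in\regS^{\circ}$; under this condition the summable invariant measure \eqref{SFEQ} is, after normalization, the stationary distribution and the irreducible chain is positive recurrent.

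The step I expect to be the main obstacle is getting the reversed dynamics right in the second paragraph: one must be careful that time reversal turns FIFO into LIFO and turns ``join the back of the next queue'' into ``leave from the head of the previous queue,'' so that the head-of-line packet carrying the service rate $\sigma^{SF}_j$ corresponds, in the reversed chain, to exactly the right packet and the bookkeeping of $\vecGamma$ matches up. Once this correspondence of states and transitions is set up correctly, the verification is just the telescoping computation above; the structural fact that keeps it clean is that \eqref{SFEQ} depends on $\vecGamma$ only through the per-route totals $\Gamma_{jr}(Q_j)$, which is precisely what makes FIFO and LIFO interchangeable at the level of the stationary law.
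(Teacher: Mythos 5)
Your proposal follows essentially the same route as the paper: guess the reversed chain (route-$r$ packets arriving at rate $a_r$ at the last queue of $r$ and traversing it backwards, with the same Store-Forward rates and the same $\Phi$), verify Kelly's lemma via the telescoping of $\Phi$-ratios and $a_r$ factors for the three transition types, and obtain positive recurrence by summing out $\vecGamma$ and invoking \eqref{PhieqPi} to reduce normalizability to finiteness of the measure $\pi^\circ$. The one slip is calling the reversed discipline last-in-first-out: the exact time reversal demanded by Kelly's lemma is still FIFO, only with arrivals inserted at the front and service taken from the rear of each queue (a literal LIFO reversal would make $q^{R}(y,x)$ vanish for an arrival that appended a packet to the back of a nonempty queue), which is precisely how the paper defines the reversed network and what your own telescoping implicitly requires.
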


The proof of this Theorem is given in the appendix. Note that an immediate consequence of the above theorem is the following:

\begin{corollary}\label{Cor1}
For a stationary FIFO routed Store-Forward Network, conditional on length of queue $j$, the route-class of a packet at the queue is independent and from route $r$ with probability 
\[
\frac{a_r}{a_j},\qquad \quad r\ni j.
\]
\end{corollary}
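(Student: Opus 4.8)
The plan is to read the claim directly off the product form \eqref{SFEQ} in Theorem~\ref{SF:eq}. The first observation is that the factor $\Phi(\vecQ)$ depends only on the queue-length vector $\vecQ$ and carries no dependence on the ordering data $\vecGamma$. Hence, conditioning on $\vecQ$, the conditional law of $\vecGamma$ is proportional to $\prod_{j\in\mJ}\prod_{r:j\in r}a_r^{\Gamma_{jr}(Q_j)}$, which is a product over the queues $j\in\mJ$; so, conditional on $\vecQ$, the per-queue orderings are mutually independent across $j\in\mJ$, and for each $j$ the conditional weight of its ordering depends on $\vecQ$ only through $Q_j$.

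Next I would unwind the definition of $\vecGamma_{j\cdot}$. An ordering of the $Q_j$ packets in queue $j$ is the same thing as a word $(c_1,\dots,c_{Q_j})$ with each letter $c_i\in\{r\in\mR:j\in r\}$ recording the route-class of the packet in position $i$; under this identification $\Gamma_{jr}(Q_j)=\#\{i:c_i=r\}$, so the conditional weight of the word is $\prod_{r:j\in r}a_r^{\#\{i:c_i=r\}}=\prod_{i=1}^{Q_j}a_{c_i}$. This is precisely an i.i.d.\ product measure on words: summing over all words of length $Q_j$ gives normalising constant $\big(\sum_{r:j\in r}a_r\big)^{Q_j}=a_j^{Q_j}$ by the definition $a_j=\sum_{r\ni j}a_r$, so $\mathbb{P}\big((c_1,\dots,c_{Q_j})\mid\vecQ\big)=\prod_{i=1}^{Q_j}(a_{c_i}/a_j)$. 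In words, given $\vecQ$ the route-classes of the packets in queue $j$ are independent, and each is route $r$ with probability $a_r/a_j$.

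Finally I would weaken the conditioning from the whole vector $\vecQ$ to $Q_j$ alone: summing the displayed conditional probability over all $\vecQ$ with $j$-th coordinate equal to a fixed value $q$ and over all orderings at the other queues $k\neq j$, every factor not involving queue $j$'s ordering (the term $\Phi(\vecQ)$ and the products over $k\neq j$) cancels between numerator and denominator, leaving $\mathbb{P}\big((c_1,\dots,c_q)\mid Q_j=q\big)=\prod_{i=1}^{q}(a_{c_i}/a_j)$, which is the assertion of the corollary. I expect no real obstacle here: the only care needed is in the bookkeeping of the definition of $\vecGamma$, together with the key structural point that $\Phi$ is a function of queue lengths alone, so that the route-label randomness decouples both queue-by-queue and position-by-position.
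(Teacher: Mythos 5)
Your proof is correct and is exactly the computation the paper has in mind: the paper states Corollary~\ref{Cor1} without a written proof, calling it an immediate consequence of Theorem~\ref{SF:eq}, and your reading of the product form \eqref{SFEQ} --- factoring out $\Phi(\vecQ)$, identifying the conditional weight of an ordering of queue $j$ as $\prod_{i=1}^{Q_j} a_{c_i}$, and normalising by $\big(\sum_{r\ni j}a_r\big)^{Q_j}=a_j^{Q_j}$ --- is precisely the computation that makes it immediate (the same summation over $\vecGamma$ appears in the paper's proof of Proposition~\ref{Prop1}). Your final step, weakening the conditioning from $\vecQ$ to $Q_j$, is also handled correctly, since the conditional law of queue $j$'s ordering given $\vecQ$ depends on $\vecQ$ only through $Q_j$.
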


An important consequence of the structure of the Store-Forward allocation is that the stationary queue size $\vQ$ can be represented in terms of a number of independent random variables. 

\begin{proposition}\label{Prop1}
There are mutually independent random vectors $\vm_l=(m_{lj}: j\in l)$, $l\in\mL$, where
\[
Q_j = \sum_{l: l\in j} m_{lj}, 
\]
and each
\[
m_l: = \sum_{j\in l} m_{lj}, \quad l\in\mL,
\]
is a geometric random variable with parameter $a_l$ and, conditional on $m_l$, the random variables $(m_{lj}: j\in l)$ have a multinomial distribution with $m_l$ trials and parameters 
\[
\Big( \frac{A_{lj}a_j}{a_l} : j\in l\Big).
\]
\end{proposition}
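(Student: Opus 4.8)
The plan is to pass from the joint equilibrium law of $(\vecQ,\vecGamma)$ given in Theorem~\ref{SF:eq} to the marginal law of $\vecQ$ alone, to recognise that marginal as a functional of the closed Kelly network of Section~\ref{ClosedSF}, and then to read the asserted independence directly off the product form \eqref{picirc}.

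First I would integrate out the order variables $\vecGamma$. For a queue $j$ holding $Q_j$ packets, the factor $\prod_{r:j\in r}a_r^{\Gamma_{jr}(Q_j)}$ in \eqref{SFEQ} depends on $\vecGamma$ only through the route-class counts $n_{jr}:=\Gamma_{jr}(Q_j)$, and the number of FIFO orderings of queue $j$ consistent with prescribed counts $(n_{jr}:r\ni j)$ is the multinomial coefficient ${Q_j \choose n_{jr}:r\ni j}$. Summing over $\vecGamma$ and applying the multinomial theorem together with $a_j=\sum_{r:j\in r}a_r$ collapses the $j$th factor to $a_j^{Q_j}$, so that the marginal queue-size law is $\pi(\vecQ)=\Phi(\vecQ)\prod_{j\in\mJ}a_j^{Q_j}$.

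Next I would invoke \eqref{PhieqPi}, which (up to the per-pool constants discussed below) identifies $\Phi(\vecQ)\prod_{j\in\mJ}a_j^{Q_j}$ with the probability, under the stationary law \eqref{picirc}, that the Kelly-network populations $\vm=(m_{lj})$ satisfy $\sum_{l:j\in l}m_{lj}=Q_j$ for all $j\in\mJ$; thus $\vecQ$ is distributed as $\big(\sum_{l:j\in l}m_{lj}:j\in\mJ\big)$ when $\vm\sim\pi^\circ$. The structural point is that \eqref{picirc} factorises over the pools $l\in\mL$, the $l$th factor depending only on $\vm_l=(m_{lj}:j\in l)$; hence the $\vm_l$, $l\in\mL$, are mutually independent under $\pi^\circ$, and setting $Q_j:=\sum_{l:j\in l}m_{lj}$ on this probability space gives the claimed representation. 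It remains only to identify the law of a single $\vm_l$: grouping the $l$th factor in \eqref{picirc} by first conditioning on $m_l=\sum_{j\in l}m_{lj}$ and using $a_l=\sum_{j\in l}A_{lj}a_j$, it is proportional to $a_l^{m_l}\cdot{m_l\choose m_{lj}:j\in l}\prod_{j\in l}(A_{lj}a_j/a_l)^{m_{lj}}$, which is the product of a geometric law with parameter $a_l$ for $m_l$ and, conditional on $m_l$, a multinomial law with $m_l$ trials and cell probabilities $(A_{lj}a_j/a_l:j\in l)$.

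I do not anticipate a serious obstacle. The two places needing care are the combinatorial step --- recognising the multinomial coefficient that counts the FIFO orderings of a queue and resumming it via the multinomial theorem --- and keeping the per-pool normalising constants $1-a_l$ consistent between \eqref{picirc}, \eqref{PhieqPi} and the normalisation of \eqref{SFEQ}; everything else is a direct reading of the product form.
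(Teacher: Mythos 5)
Your proposal is correct and follows essentially the same route as the paper's proof: marginalise out $\vecGamma$ to obtain $\pi(\vecQ)=\Phi(\vecQ)\prod_j a_j^{Q_j}$, identify this with $\pi^\circ$ of the event $\{\sum_{l:j\in l}m_{lj}=Q_j\}$ via \eqref{PhieqPi}, and read the independence of the $\vm_l$ and their geometric--multinomial law directly off the product form \eqref{picirc}. The only cosmetic difference is in the marginalisation step (you resum route-class counts with a multinomial coefficient, the paper telescopes position by position, each yielding a factor $a_j$ per packet), and your explicit attention to the per-pool normalising constant is well placed.
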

A proof of this result is given in the Appendix.

The above proposition can be observed directly from the definition of the  Store-Forward allocation. Its consequences for proportional fair systems are first explored by Kang, Kelly, Lee and Williams \cite{kklw07ii}  in order to analyse resource pooling effects in Bandwidth sharing networks with multipath routing. It is first applied to switch systems by Shah et al \cite{swz11} to prove optimal scaling behaviour for switch networks in heavy traffic.

\begin{figure*}[tbh!]
  \centering
\includegraphics[width=0.85\textwidth]{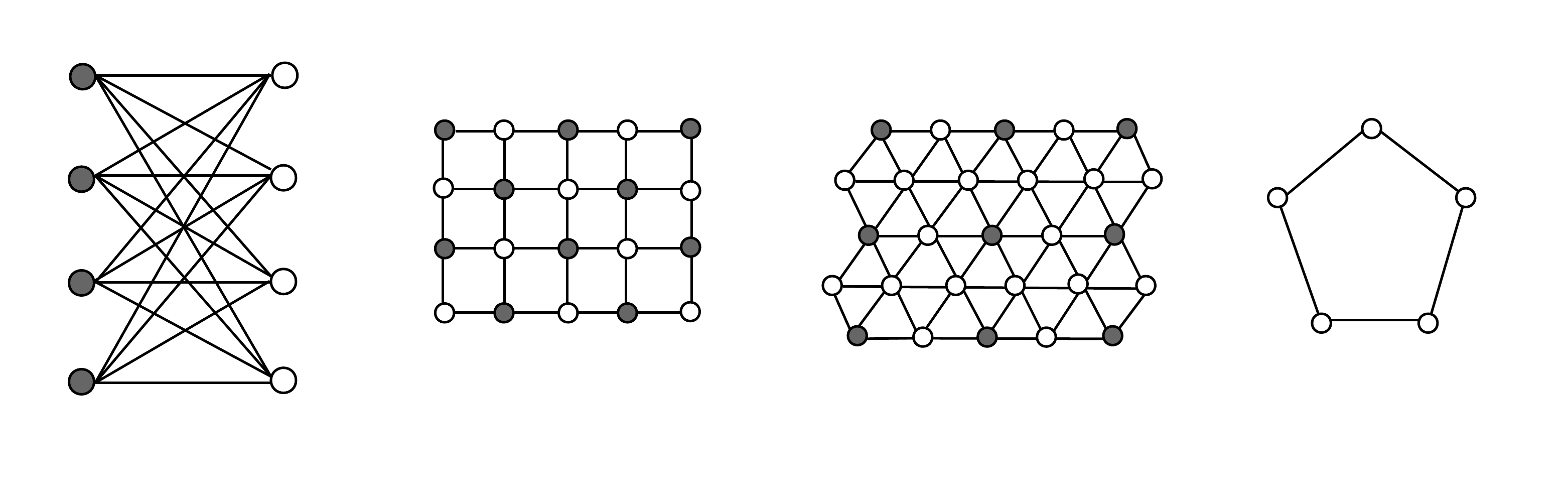}
\caption{Four examples of interference graphs: a complete bipartite graph (input-queue switch), a square grid, a triangular grid and a odd length cycle. The first three networks are perfect graphs. In each example, the nodes coloured grey are mutually independent.
\label{Graphs} }
\end{figure*}

\subsection{Delay}
\label{Sec:Delay}
With Theorem \ref{SF:eq} and Proposition \ref{Prop1}, we can analyse the delay of a packet traversing its route in a Store-Forward network as follows. 

\begin{theorem}\label{ThrmDelay}
For a stationary Store-Forward network, the delay on a route $r$ given by $D_r$ has expectation
\[
\bE\big[ D_r \big] = \sum_{j\in r} \sum_{l : j\in l} \frac{A_{lj}}{1-a_l},
\]
or, as a shorthand in matrix multiplication, letting $\bar{\vm}=((1-a_l)^{-1} : l\in\mL )$ and $\bar{n}^r=( \bI[j \in r] : j\in \mJ )$  then
\[
\bE\big[ D_r \big]  = \bar{\vm}^{\T} A \bar{\vn}^r .
\]

\end{theorem}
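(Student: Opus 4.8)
The plan is to avoid any direct analysis of the state-dependent service rates $\sigma^{SF}_j(\vecQ)$ and instead chain together Little's law with the two structural facts already established: Corollary~\ref{Cor1}, which gives the route composition of a queue conditional on its length, and Proposition~\ref{Prop1}, which gives the law of the queue-size vector through the independent pool occupancies $\vm_l$.

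First I would set up a per-route Little's law. Let $X_r=\sum_{j\in r}X_{jr}$ denote the stationary number of route-$r$ packets present in the network. Route-$r$ packets arrive at Poisson rate $a_r$ and, by positive recurrence (Theorem~\ref{SF:eq}, which applies since $(a_j:j\in\mJ)\in<\mS>^\circ$), each eventually leaves after its end-to-end delay $D_r$; Little's law then gives $\bE[X_r]=a_r\,\bE[D_r]$, with the time-stationary expectation on the left and the per-packet (Palm) expectation on the right. Hence $\bE[D_r]=a_r^{-1}\sum_{j\in r}\bE[X_{jr}]$, with finiteness of all quantities confirmed in the last step.

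Next I would evaluate $\bE[X_{jr}]$ via Corollary~\ref{Cor1}: conditional on $Q_j$, each packet in queue $j$ is independently of route $r$ with probability $a_r/a_j$, so $\bE[X_{jr}\mid Q_j]=(a_r/a_j)Q_j$ and therefore $\bE[X_{jr}]=(a_r/a_j)\bE[Q_j]$. Substituting, the factor $a_r$ cancels and I obtain the intermediate identity $\bE[D_r]=\sum_{j\in r}\bE[Q_j]/a_j$ (all terms being non-negative, the interchanges of sum and expectation are justified by Tonelli). Then I would compute $\bE[Q_j]$ from Proposition~\ref{Prop1}: writing $Q_j=\sum_{l:j\in l}m_{lj}$ and conditioning on $m_l$, the multinomial description gives $\bE[m_{lj}\mid m_l]=(A_{lj}a_j/a_l)m_l$, while the geometric law of $m_l$ (with $\bP(m_l=k)\propto a_l^k$, $k\ge0$) has mean $\bE[m_l]=a_l/(1-a_l)$, finite because $(a_j)\in<\mS>^\circ$ forces $a_l<1$. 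Hence $\bE[m_{lj}]=A_{lj}a_j/(1-a_l)$ and $\bE[Q_j]=\sum_{l:j\in l}A_{lj}a_j/(1-a_l)$. Plugging this into the intermediate identity yields $\bE[D_r]=\sum_{j\in r}\sum_{l:j\in l}A_{lj}/(1-a_l)$, and the matrix form is immediate from $(A\bar{\vn}^r)_l=\sum_{j\in r}A_{lj}$, so that $\bar{\vm}^{\T}A\bar{\vn}^r=\sum_{l\in\mL}(1-a_l)^{-1}\sum_{j\in r}A_{lj}=\sum_{j\in r}\sum_{l:j\in l}A_{lj}/(1-a_l)$.

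The computations above are routine bookkeeping once the framework is in place; the only point deserving care is the invocation of Little's law, where one must know that $D_r$ is almost surely finite (a consequence of positive recurrence, so that no packet is trapped) and that $\bE[X_r]<\infty$, so that $\bE[X_r]=a_r\bE[D_r]$ holds with both sides finite — the latter supplied precisely by the geometric tails in Proposition~\ref{Prop1}. I would therefore regard verifying this finiteness and the applicability of Little's law as the main (and modest) obstacle, with the remainder being direct substitution from Corollary~\ref{Cor1} and Proposition~\ref{Prop1}.
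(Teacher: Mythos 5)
Your proposal is correct and follows essentially the same route as the paper's proof: Little's law to reduce $\bE[D_r]$ to $\sum_{j\in r}\bE[Q_{jr}]/a_r$, then Corollary \ref{Cor1} to get $\bE[Q_{jr}]=(a_r/a_j)\bE[Q_j]$, and finally Proposition \ref{Prop1} to compute $\bE[Q_j]$ via the geometric/multinomial decomposition. The extra care you take over finiteness and the applicability of Little's law is a welcome addition but does not change the argument.
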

\begin{proof}
First by Little's Law the delay on a route, $D_r$, has expectation 
\[
\bE\big[  D_r\big]  = \frac{1}{a_r} \sum_{j\in r} \bE \big[  Q_{jr}\big] .
\]
By Corollary \ref{Cor1} above,
\[
\bE \big[ Q_{jr}\big] = \frac{a_r}{a_j} \bE \big[ Q_j\big]
\]
and by Proposition \ref{Prop1}, we know that 
\[
\bE \big[ Q_j \big]= \sum_{l : j\in l} \bE\big[  m_{lj}\big] = \sum_{l : j\in l} \frac{A_{lj} a_j}{a_l} \bE\big[  m_l \big]
\]
and
\[
\bE\big[  m_l \big]= \frac{a_l}{1-a_l}.
\]
Combining the four equalities above, we have as required,
\[
\bE \big[ D_r \big]= \sum_{j\in r} \sum_{l : j\in l} \frac{A_{lj}}{1-a_l}.
\]

\end{proof}
\begin{remark}
The result generalizes if we allow routes to visit queues more than once, in particular, if we let $\bar{\vm}=((1-a_l)^{-1}: l\in\mL)$ and we let $\bar{\vn}^r=(n^r_j : j\in\mJ)$ give the (mean) number of times a route $r$ job visits queue $j$ then the delay on that route is given by matrix multiplication
\[
\bE \big[ D_r \big] = \bar{\vm}^\T \vA \bar{\vn}^r.
\]
\end{remark}
\begin{remark}
It is well know that the delay for the BackPressure policy depends on length of the route, see \cite{BSS11,St11}
Note that unlike BackPressure, the position of a queue within a route does not effect the queue size and delay, what matters is the delay induced by the load vector $(a_l: l\in\mL)$. 

This is significant because if one makes and improvement in the service capacity in parts of the 
network then the queue sizes and, thus, delay will be reduced in that part of the network. For 
example, consider a single long route where schedules between queues do not interfere. For 
BackPressure under moderate load it is know that queue sizes grow linearly from the last queue to 
the first \cite{BSS11}, and an improvement in the service capacity of a queue -- say we double the 
service capacity of the first queue on the route -- will not reduce queue sizes below the size of 
the queue in front. However, the Store-Forward network we see from the above expression that 
doubling service capacity will halve the queue size at that queue and thus significantly reduce 
delays.
\end{remark}
\begin{remark}
Given Theorem \ref{ThrmDelay}, it is reasonable to conjecture that the same holds for the Proportional Scheduler, that is, if $D^{PS}_r$ gives the stationary route delay of the proportional scheduler, then
\begin{equation}
\bE\big[ D^{PS}_r \big] \sim \sum_{j\in r} \sum_{l : j\in l} \frac{A_{lj} }{1-a_l}
\end{equation}
as we let the load vector $(a_j : j\in\mJ)$ approach the boundary of the scheduling polytope $<\mS>$. 
\end{remark}

To understand the effect of loads on queue sizes a better understanding of the effect the scheduling set $\mS$ on the matrix $A$ is required. For certain examples, bipartite graphs and perfect graphs, the matrix $A$ has a relatively simple structure we discuss this in the next section.

\subsection{Product Form Resource Pooling}
\label{Sec:Prod}
The underlying queueing mechanism for Store-Forward is quasi-reversible. It is often found that 
quasi-reversible Markov chains exhibit product-form stationary distributions on product sets. This 
is observed to be the case for Store-Forward. For instance, in the upcoming paper \cite[Proposition 
1]{BDW14} it is proven that a Store-Forward network whose schedules are a product set, i.e.,  $\mS= 
\mS_1\! \times\! ...\! \times\! \mS_N$, the stationary queue size vectors associated with each of 
these components  are independent.

We generalize this result in the following proposition. Typically product-form results apply over 
product sets; however, an interesting consequence of this result is that the queues considered do 
not need  the scheduling set $\mS$ to be of a product type in order to have product-form stationary 
behaviour.

\begin{theorem}\label{prodthrm}
Consider a stationary Store-Forward Network on scheduling set 
\begin{equation}
<\mS> = \bigg\{ \vs\in\bR_+^\mJ : \sum_{j\in\mJ} A_{lj}s_j \leq 1, l\in\mL \bigg\}.
\end{equation}
If there are two queues $j$ and ${j'}$ such that there is no share resource pool, i.e. $\nexists$ $l\in\mL$ such that $A_{lj}>0$ and $A_{lj'}>0$, then the queues are statistically independent. 
\end{theorem}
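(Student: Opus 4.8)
The plan is to reduce the claim to Proposition~\ref{Prop1}, which already does the heavy lifting: it exhibits a representation of the stationary queue-length vector $(Q_j : j\in\mJ)$ as a deterministic function of a family $\{\vm_l\}_{l\in\mL}$ of \emph{mutually independent} random vectors $\vm_l=(m_{lj}:j\in l)$, via $Q_j=\sum_{l:\,j\in l}m_{lj}$. (This representation is in turn a consequence of \eqref{picirc}--\eqref{PhieqPi} together with Theorem~\ref{SF:eq}.)

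First I would record the key measurability observation: for a fixed queue $j$, the random variable $Q_j=\sum_{l:\,j\in l}m_{lj}$ is a function of the sub-family $\{\vm_l : l\in\mL,\ j\in l\}$ only; it does not involve $\vm_l$ for any pool $l$ with $j\notin l$. The same holds for $Q_{j'}$ with the sub-family $\{\vm_l : j'\in l\}$.

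Second I would bring in the hypothesis. ``No shared resource pool'' means precisely that there is no $l\in\mL$ with $A_{lj}>0$ and $A_{lj'}>0$, i.e. $\{l : j\in l\}\cap\{l : j'\in l\}=\emptyset$. Hence $Q_j$ and $Q_{j'}$ are functions of two \emph{disjoint} sub-collections of the mutually independent family $\{\vm_l\}_{l\in\mL}$; grouping mutually independent random variables along disjoint index sets yields independent groups, so $Q_j$ and $Q_{j'}$ are independent. Since independence of $(Q_j,Q_{j'})$ is a property of their joint law alone, and Proposition~\ref{Prop1} asserts that this joint law is exactly the stationary one of the Store-Forward network, the conclusion transfers back. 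If one wants the sharper statement that the full ordered contents of queue $j$ and of queue $j'$ are independent, it suffices to condition on the two queue lengths and apply Corollary~\ref{Cor1}, which realises the route-composition at each queue from $Q_j$ plus fresh independent randomness; disjointness then again gives independence. The generalisation to two \emph{groups} of queues whose pools are disjoint is immediate by the same argument, which is presumably the route to Corollary~\ref{coroll}.

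The proof is genuinely short, so I do not expect a serious technical obstacle; rather, the point to be careful about is conceptual bookkeeping. One must state clearly that ``no shared resource pool'' is the exact combinatorial condition that makes $\{l:j\in l\}$ and $\{l:j'\in l\}$ disjoint, and emphasise --- as the surrounding text stresses --- that this is strictly weaker than $<\mS>$ being a product polytope: two queues can lie in the same connected component of the queue--pool incidence bipartite graph (linked through a chain of intermediate queues and pools) and still touch no common pool, so the usual ``independence on product spaces'' heuristic does not apply and the decomposition of Proposition~\ref{Prop1} is doing something genuinely finer.
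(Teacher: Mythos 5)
Your proposal is correct and follows essentially the same route as the paper's own proof: both invoke Proposition~\ref{Prop1} to write $Q_j=\sum_{l:\,j\in l}m_{lj}$ and $Q_{j'}=\sum_{l:\,j'\in l}m_{lj'}$ as sums over disjoint subsets of the mutually independent family $\{\vm_l\}_{l\in\mL}$, and then handle the within-queue route composition by conditional independence given the queue lengths (the paper states this directly for the $\Gamma_{jr}$; you route it through Corollary~\ref{Cor1}, which is the same fact). No gap.
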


Note when we say the queues are independent we also mean the route classes of the packets within the queue are also statistically independent. A proof of this result is given in the Appendix. Further, the result immediately extends to give the independence of sets of queues provided no pair of queues shares a common resource pool.

Now the above result may seem somewhat abstract because of the dependency on the matrix $\vA$. A good example to consider is where the matrix $\vA$ corresponds to the interference graph of a perfect graph. 
We take a graph $G=(\mJ,\mE)$ with vertices $\mJ$ and edges $\mE$. Here a queue located at each 
vertex of the graph can transmit a packet provided none of its neighbours transmit. Thus the set of 
schedules are the independent sets of this graph:
\[
\mS = \{ \sigma\in \bZ_+^\mJ : \sigma_j + \sigma_{j'} \leq 1, (j,j')\in\mE \}
\]
A graph is perfect if the neither the graph $G$ nor its complement contain an odd cycle of length $5$ or greater.  Here bipartite graphs are an important special case.
% REFERENCES
When a graph is perfect the convex hull of $\mS$ takes a explicit form 
\[
<\mS > = \bigg\{ s\in  [0,1]^\mJ : \sum_{j\in C} s_j  \leq 1, C\in\mC \bigg\}
\]
where $\mC$ gives the set of cliques of the graph $G$. This deep result is proven in \cite{chudnovsky2006strong}. So for a perfect graph, the resource pools are the cliques of the graph and to queues do not share a clique so long as they are not neighbours. Thus a direct consequence of the above theorem is the following observation
\begin{corollary}\label{coroll}
For a Store-Forward network on an interference graph which is perfect, if two queues are not 
neighbours then they are statistically independent. 
\end{corollary}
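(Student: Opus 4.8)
The plan is to realize the interference-graph scheduling set as a special case of the abstract resource-pool description of Section~2 and then invoke Theorem~\ref{prodthrm} directly. First I would recall that for an interference graph $G=(\mJ,\mE)$ the schedules are $\mS = \{\sigma\in\bZ_+^\mJ : \sigma_j+\sigma_{j'}\le 1,\ (j,j')\in\mE\}$, which is manifestly monotone, so the construction of Section~2 applies and $<\mS>$ admits a representation $<\mS> = \{\vs\in\bR_+^\mJ : \sum_{j\in\mJ} A_{lj}s_j\le 1,\ l\in\mL\}$. When $G$ is perfect, the cited polyhedral characterization gives $<\mS> = \{\vs\in[0,1]^\mJ : \sum_{j\in C} s_j\le 1,\ C\in\mC\}$ with $\mC$ the cliques of $G$. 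The key bookkeeping step is to identify the two descriptions: take $\mL$ to be the family of maximal cliques and $A_{lj}=\bI[j\in C_l]$, a non-negative $0/1$ matrix whose rows are exactly the facet-defining clique inequalities of $<\mS>$.

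Once this identification is in place the argument is immediate. By construction, queues $j$ and $j'$ share a resource pool $l$ exactly when $A_{lj}>0$ and $A_{lj'}>0$, i.e.\ when the clique $C_l$ contains both $j$ and $j'$. A set of vertices of $G$ is a clique precisely when every pair in it is adjacent, so some clique contains both $j$ and $j'$ if and only if $j=j'$ or $(j,j')\in\mE$. Hence two distinct non-neighbouring queues share no resource pool, and Theorem~\ref{prodthrm} yields that they --- together with the route-class composition of the packets they hold --- are statistically independent. Applying the same reasoning to any set of pairwise non-adjacent vertices, and using the remark following Theorem~\ref{prodthrm}, gives the mutual independence of all such queues; this is the independence depicted for the grey nodes in Figure~\ref{Graphs}.

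I expect the only real subtlety to lie in the first, polyhedral, step. One must quote the perfect-graph result in the correct form --- it is the clique-constrained stable-set polytope that coincides with $<\mS>$, and this coincidence is equivalent to perfection --- and one should pass to the subfamily of \emph{maximal} cliques so that the resulting matrix $\vA$ is an irredundant (full-row-rank in the sense of Section~2) representation with no facet omitted; in particular one checks that every edge of $G$ is covered by some maximal clique, so that all the pairwise constraints $s_j+s_{j'}\le 1$ are retained. Everything after that is just unwinding definitions, since a ``resource pool'' in Theorem~\ref{prodthrm} is by definition a row of $\vA$ and ``$j\in l$'' means $A_{lj}>0$.
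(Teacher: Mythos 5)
Your proposal is correct and follows essentially the same route as the paper: identify the resource pools of $<\mS>$ with the cliques of the perfect graph via the polyhedral characterization, observe that two non-adjacent queues therefore share no resource pool, and invoke Theorem~\ref{prodthrm}. Your extra care about restricting to maximal cliques so that $\vA$ is an irredundant full-row-rank representation is a reasonable tightening of a step the paper leaves implicit, but it does not change the argument.
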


See Figure \ref{Graphs}  for some example of perfect graphs. 
The result is interesting since these graphs appear to exhibit periods of independence when mixing between disjoint modes of operation under different scheduling algorithms, e.g. for a complete bipartite graph either the left-hand side is sending or the right-side and during these periods queue sizes at nodes should be approximately independent. See \cite{Zocca2013900} and 
\cite{zocca2012mixing} for in depth discussion a CSMA scheduling algorithm on these network topologies.

Of course, it is natural to conjecture that this same independent behaviour occurs for the Proportional Scheduler under limit regimes where the network is congested, such as heavy traffic or large deviations limits.

\subsection{ Large Deviations Estimates and Lyapunov functions}
\label{sec:LDP}
Finally, we analyse the large deviations behaviour of the Proportional Scheduler. Analogous to 
Massouli\'{e} \cite{Ma07}, the  large deviations rate function found in Theorem \ref{SF:eq} provides 
a Lyapunov function that we can then use to prove stability for the Proportional 
Scheduler.\footnote{Theorem \ref{SF:eq} was proven in order to justify proceeding with the proofs in 
\cite{Wa14a} and \cite{BDW14}.} Indeed, the hope is that calculations of this type can be used to 
form Lyapunov functions of other FIFO switch scheduling policies. However, currently it is not clear 
how critical the underlying reversibility of the model is required in order for this analysis to 
work out. 

For reasons of space, we give a heuristic derivation of the rate function associated with the stationary distribution, \eqref{SFEQ}. We take a sequence of states $(\vQ^c,\vecGamma^c)$ where
\begin{equation*}
\frac{\vQ^c}{c} \xrightarrow[c\rightarrow\infty]{} \vQ 
\end{equation*}
and for a piecewise linear process $\vecGamma$
\begin{equation*}
\frac{\vecGamma^c(cq)}{c} \xrightarrow[c\rightarrow\infty]{} \vecGamma(q),
\end{equation*}
for  $q\leq \vQ_j$.
Here, convergence is point-wise (or uniform since the limit process is increasing and continuous). 
Since $\vecGamma$ is assumed to be piecewise linear, we let $k=0,...,K$ index these linear stages 
with each stage each started at queue sizes $Q_{j}(k)$, $k=0,...,K$. Let $\Delta Q_j(k) = (Q_j(k) - 
Q_j(k-1))$ and let $\vecGamma'_{jr}(k)$ index the gradient of $\vecGamma$ in each stage. By 
definition
\[
\sum_{r} \vecGamma'_{jr}(k) = 0
\]
for each $j\in\mJ$ and $k=1,...,K$. So we may interpret these gradients as probability distributions 
giving the relative density of packets along each FIFO queue.

Under $\pi$, the stationary probability that $\vecGamma^c(Q^c_j(k))/c \approx \vecGamma(Q_j(k))$, for $k=0,...,K$ and that $\vQ^c \approx c\vQ$ is approximated by

\[
P^c:= \Phi(\vQ^c) \prod_{j\in\mJ}\prod_{k=1}^K { c\Delta Q_j(k)  \choose c\Gamma'_{jr}\Delta Q_j(k) :  r\ni j }  \prod_{r\in\mR} a_r^{\Gamma'_{jr}\Delta Q_j(k)}.
\]
It can be shown that $\Phi(\vQ^c)$ approximates the solution of the proportional fair optimization in the following sense
\[
\lim_{c\rightarrow\infty} \frac{1}{c} \log \Phi(\vQ^c) = -\max_{\sigma \in <\mS>} \sum_{j\in\mJ} Q_j \log \sigma_j .
\]
A short argument for the above equality can be found in \cite{Wa11b}.  Applying a Stirling's 
approximation to the multinomial term about yields the expression
\begin{align*}
&\lim_{c\rightarrow\infty} \frac{1}{c} \log P^c \\
=& -\max_{\sigma\in <\mS>} \sum_{j\in\mJ} Q_j \log \sigma_j  \\
& + \sum_j  \sum_k \bigg[ \Delta Q_j(k) \log \Delta Q_j(k) \\
&\qquad - \sum_{r} \Gamma'_{jr} \Delta Q_j(k) \log \Big(\Gamma'_{jr} \Delta Q_j(k)\Big) \\
& \qquad + \sum_r \Gamma'_{jr} \Delta Q_j(k) \log a_r \bigg] \\
=&  -\max_{\sigma\in <\mS>} \sum_{j\in\mJ} Q_j \log \sigma_j  \\
 & - \sum_j \sum_r \sum_k \Delta Q_j(k) \Gamma'_{jr}(k) \log \Big( \frac{\Gamma'_{jr}(k)}{a_r} \Big)\\
 = &  -\max_{\sigma\in <\mS>} \sum_{j\in\mJ} Q_j \log \sigma_j   -\sum_j \sum_r \int_0^{Q_j}  \log \Big( \frac{\Gamma'_{jr}}{a_r}\Big)\;   d  \Gamma_{jr}
\end{align*}

The entropy terms derived above can be seen as a form of Sanov's Theorem. The rate function with 
this integral representation is used by Bramson as a Lyapunov function to prove stability for FIFO 
queueing networks with fixed service capacity \cite{Br96a}. Removing this integral term, a similar 
large deviations argument is used by Massouli\'e to prove stability for proportional fair networks 
with probabilistic routing \cite{Ma07}. Combining these entropy arguments, the above rate function 
provides a Lyapunov function for the Proportional Scheduler in FIFO switch networks, \cite{BDW14}. 
It is natural to conjecture that the above rate function is the large deviations rate function for a 
FIFO network operating under the Proportional Scheduler. It is currently unclear the extent to which 
the above heuristic and Lyapunov function can be developed beyond a proportional fair framework.

\bibliographystyle{IEEEtranS}
\bibliography{REFERENCES}

\appendix

\section{Additional Proofs}
We provide proofs of a number of results in the body of the text. First, we prove Theorem \ref{SF:eq}.

\begin{proof}[Proof of Theorem \ref{SF:eq}]
We verify that $\pi(\vecQ,\vecGamma)$ is a stationary distribution by confirming quasi-reversibility 
of our Markov chain. Here we must define the time-reversal of the Store-Forward network. In this 
time reversal, packets on each route $r$ arrive as a Poisson process of rate $a_r$ at the last queue 
on route $r=(j^r_1,...,j^r_{k_r})$, namely $j^r_{k_r}$. Packets have a service requirement that is 
independent exponentially distributed with mean $1$. Given the vector of queue sizes $\vecQ$ 
packets, queue are served at a rate as given by the Store-Forward allocation, 
$\vecsigma^{SF}(\vecQ)$. Queues are served in a FIFO order; however, in comparison to the 
Store-Forward network, (in forward time) jobs are served from the end of the queue and arrival are 
placed at the front of the queue. A packet of route $r$ completing service on the $k$th stage of its 
route, i.e. at queue $j^r_{k}$, then moves to the $k-1$th queue on its route, i.e. to queue 
$j^r_{k-1}$, or leaves the network if $k=1$ and thus has completed service at all queues on its 
route.

Now there are three types of transition that can occur: an arrival on route $r$; a departure on route $r$; and a transition between of a route $r$ job between queues $j$ and $j'$. We let $q((\vecQ,\vecGamma),(\vecQ',\vecGamma'))$ give the transition rates of the Store-Forward network as described in Section \ref{SFN}. We let $q^R((\vecQ,\vecGamma),(\vecQ',\vecGamma'))$ be the transition rates of the time reversal of this Store-Forward network, as described in the above paragraph.

We verify balance equations, first, for an arrival transition on route $r$ at queue $j=j^r_1$, here a transition from $(\vecQ,\vecGamma)$ to $(\vecQ',\vecGamma')$ where $\vecQ$ occurs where $Q_j'=Q_j+1$ and $\Gamma'_{jr}(Q_j+1)=\Gamma_{jr}(Q_j)+1$, all other components of $(\vecQ,\vecGamma)$ and $(\vecQ',\vecGamma')$ are equal.
\begin{align}
&\pi(\vecQ,\vecGamma)\times q((\vecQ,\vecGamma), (\vecQ',\vecGamma')) \label{Balance1}\\
=& \underbrace{\frac{\Phi(\vecQ)}{\Phi(\vecQ+\ve_j)}}_{\sigma_j^{SF}(\vecQ')} \Phi(\vecQ+\ve_j) \prod_{j\in\mJ}\prod_{r: j\in r} \Big({a_r}^{\Gamma_{jr}(Q_j)}\Big) \times a_r \\
= &  \pi(\vecQ',\vecGamma')\times q^R((\vecQ',\vecGamma'), (\vecQ,\vecGamma)) \label{Balance3}
\end{align}
The same argument taken from equation \eqref{Balance3} to \eqref{Balance1} gives the equivalent expression shows that these balance equations hold for a departure transition. Next, for a transition between queues $j$ and $j'$ on route $r$, a transition from $(\vecQ,\vecGamma)$ to $(\vecQ',\vecGamma')$ occurs with  $Q_{j'}'=Q_{j'}+1$, $Q'_j=Q_j-1$, $\Gamma'_{j'r}(Q_{j'}+1)=\Gamma_{j'r}(Q_{j'})+1$  and $\Gamma'_{jr}(Q_j)=\emptyset$, all other components  of $(\vecQ,\vecGamma)$ and $(\vecQ',\vecGamma')$ are equal.
\begin{align}
&\pi(\vecQ,\vecGamma)\times q((\vecQ,\vecGamma), (\vecQ',\vecGamma')) \label{Balance4}\\
=&\Phi(\vecQ) \prod_{j\in\mJ}\prod_{r: j\in r} \Big({a_r}^{\Gamma_{jr}(Q_j)}\Big) \times \frac{\Phi(\vecQ-\ve_j)}{\Phi(\vecQ)}\\
=& \Phi(\vecQ') \prod_{j\in\mJ}\prod_{r: j\in r} \Big({a_r}^{\Gamma'_{jr}(Q'_j)}\Big) \times \frac{\Phi(\vecQ')}{\Phi(\vecQ'+\ve_j)}\\
= & \pi(\vecQ',\vecGamma')\times q^R((\vecQ',\vecGamma'), (\vecQ,\vecGamma))  
\end{align}
Finally, for each $\vecQ$, 
\begin{align*}
\sum_{\vecQ',\vecGamma'}  q((\vecQ,\vecGamma), (\vecQ',\vecGamma')) =& \sum_r a_r  + \sum_j \sigma_j^{SF}(\vecQ) \\
=& \sum_{\vecQ',\vecGamma'}  q^R((\vecQ,\vecGamma), (\vecQ',\vecGamma')).
\end{align*}
This verifies that the conditions of Kelly's Lemma \cite[Theorem 3.1]{Ke79} hold and thus \eqref{SFEQ} gives the stationary distribution of our FIFO routed Store-Forward network. 

Finally we show that the measure $\pi$ can be normalized,
\begin{align*}
\sum_{\vecQ, \vecGamma} \pi(\vecQ,\vecGamma)
&=\sum_{\vecQ} \Phi(\vecQ) \prod_{j\in\mJ} a_j^{Q_j} \\
&= \sum_{\vecQ} \pi^\circ \Big(\Big\{ \vm :  \sum_{l: j\in l} m_{jl} = Q_j, j\in\mJ  \Big\}\Big)= 1
\end{align*}
where in the third inequality we recall \eqref{PhieqPi}.
\end{proof}

We now provide a proof of Proposition \ref{Prop1} which established a product form relation in Store-Forward Networks.

\begin{proof}[Proof of Proposition \ref{Prop1}]
It is clear that for the distribution $\pi^\circ(\vm)$, \eqref{picirc}, gives the distribution of 
the random variables $\vM_l$ as described in Proposition \ref{Prop1}. And, as given in 
\eqref{PhieqPi}, for this distribution the following identity holds
\begin{equation}\label{appr1}
\Phi(\vQ)\prod_{j\in\mJ} \Big( a_j^{Q_j} \Big) = \pi^\circ \Big(\Big\{ \vm :  \sum_{l: j\in l} m_{jl} = Q_j, j\in\mJ  \Big\}\Big).
\end{equation}
Further, the following identity holds for $\pi(\vQ)$, the stationary distribution of a Store-Forward 
network \eqref{SFEQ} ignoring the effect of the distribution of packets in the queue, $\Gamma$,
\begin{align*}
\pi(\vQ) &:= \sum_{\Gamma} \pi(\vQ,\Gamma) \\
&= \sum_{\Gamma} \Phi(\vQ) \prod_{j\in\mJ} \prod_{k=1}^{Q_j} \prod_{r\in\mR} a_r^{\Gamma_{jr}(k)-\Gamma_{jr}(k-1)}\\
& = \Phi(\vQ) \prod_{j\in\mJ} \prod_{k=1}^{Q_j} \left( \sum_{r: j\in r} a_r \right)\\
& = \pi^\circ \Big(\Big\{ \vm :  \sum_{l: j\in l} m_{jl} = Q_j, j\in\mJ  \Big\}\Big).
\end{align*}where the final equality about uses identity \eqref{appr1}. This established the relationship between $\vQ$ and independent random variables $\vm_l$, $l\in\mL$, as described.
\end{proof}

\begin{proof}[Proof of Theorem \ref{prodthrm}]
From Proposition \ref{Prop1}, we know that the stationary queue sizes of the Store-Forward Network, $\vQ$, relate to the stationary distribution of the product from queue network, $\vm$, through the equality
\[
Q_j = \sum_{l : l\in j} m_{lj}.
\]
We recall the notation that $l \in j$ when $A_{lj}>0$ and $l\notin j$ if $A_{lj}=0$. Since the random vectors $\vm_l=(m_{lj}: j\in\mJ)$ are mutually independent over $l\in\mL$. We observe that if $j$ and $j'$ do not share a common resource pool, then the summations
\[
Q_j = \sum_{l : l\in j} m_{lj},\qquad Q_{j'} = \sum_{l : l\in j'} m_{lj'}
\]
both sum over mutually exclusive indices $l$. Thus the queue sizes $Q_j$ and $Q_{j'}$ are independent. Further, since the distribution of jobs within each queue $\Gamma_{jr}$ and $\Gamma_{j'r}$ are independent when we condition on $Q_j$ and $Q_{j'}$, the queue size distribution of the queues are independent. 
\end{proof}

\end{document}